\DeclareSIUnit \bitspersecond {bps}
\newcommand{\centrallocation}{/Users/AnumAli/Dropbox/STUDY/Central} 
\newtheorem{thm}{Theorem}
\newtheorem{prop}{Proposition}
\newcommand{\bbC}{{\mathbb{C}}}
\newcommand{\bbE}{{\mathbb{E}}}
\newcommand{\bbR}{{\mathbb{R}}}
\newcommand{\bg}{{\mathbf{g}}}
\newcommand{\bh}{{\mathbf{h}}}
\newcommand{\bx}{{\mathbf{x}}}
\newcommand{\by}{{\mathbf{y}}}
\newcommand{\bz}{{\mathbf{z}}}
\newcommand{\bzero}{{\mathbf{0}}}
\newcommand{\bone}{{\mathbf{1}}}
\newcommand{\bA}{{\mathbf{A}}}
\newcommand{\bB}{{\mathbf{B}}}
\newcommand{\bD}{{\mathbf{D}}}
\newcommand{\bE}{{\mathbf{E}}}
\newcommand{\bG}{{\mathbf{G}}}
\newcommand{\bH}{{\mathbf{H}}}
\newcommand{\bI}{{\mathbf{I}}}
\newcommand{\bP}{{\mathbf{P}}}
\newcommand{\bR}{{\mathbf{R}}}
\newcommand{\bV}{{\mathbf{V}}}
\newcommand{\bX}{{\mathbf{X}}}
\newcommand{\rmB}{{\mathrm{B}}}
\newcommand{\rmC}{{\mathrm{C}}}
\newcommand{\rmF}{{\mathrm{F}}}
\newcommand{\rmZ}{{\mathrm{Z}}}
\newcommand{\cC}{\mathcal{C}}
\newcommand{\cD}{\mathcal{D}}
\newcommand{\cN}{\mathcal{N}}
\newcommand{\cO}{\mathcal{O}}
\newcommand{\cX}{\mathcal{X}}
\newcommand{\bTheta}{\boldsymbol{\Theta}}
\newcommand{\transp}{{\sf T}}
\newcommand{\tr}{{\rm tr}}
\newcommand{\diag}{{\rm diag}}
\def\munderbar#1{\underline{\sbox\tw@{$#1$}\dp\tw@\z@\box\tw@}}
\newcommand{\zefo}{{\rmZ\rmF}}
\newcommand{\cobe}{{\rmC\rmB}}
\newcommand{\SNR}{{\rm SNR}}
\newcommand{\SINR}{{\rm SINR}}
\begin{document}
\bstctlcite{IEEEmax3beforeetal}
\title{Linear Receivers in Non-stationary Massive MIMO Channels with Visibility Regions}
\author{Anum Ali, {\it Student Member, IEEE},  Elisabeth de Carvalho, {\it Senior Member, IEEE}, and Robert W. Heath Jr., {\it Fellow, IEEE}
\thanks{This work was supported in part by TACTILENet (Grant no. 690893), within the Horizon 2020 Program, by the Danish Council for Independent Research (Det Frie Forskningsråd) DFF-133500273, by the U.S. Department of Transportation through the Data-Supported Transportation Operations and Planning (D-STOP) Tier 1 University Transportation Center, and by the National Science Foundation under Grant No. ECCS-1711702.}
\thanks{A. Ali and R. W. Heath Jr. are with the University of Texas at Austin,
Austin, TX 78701 USA (e-mail: \{anumali,rheath\}@utexas.edu). 
}
\thanks{E. de Carvalho is with the Aalborg University, 9400 Aalborg, Denmark (e-mail:
edc@es.aau.dk).}\vspace{-2em}}

\maketitle
%
\begin{abstract}
In a massive MIMO system with large arrays, the channel becomes spatially non-stationary. We study the impact of spatial non-stationarity characterized by visibility regions (VRs) where the channel energy is significant on a portion of the array. Relying on a channel model based on VRs, we provide expressions of the signal-to-interference-plus-noise ratio (SINR) of conjugate beamforming (CB) and zero-forcing (ZF) precoders. We also provide an approximate deterministic equivalent of the SINR of ZF precoders. We identify favorable and unfavorable multi-user configurations of the VRs and compare the performance of both stationary and non-stationary channels through analysis and numerical simulations.  
\end{abstract}
\begin{IEEEkeywords}
Massive MIMO, large system analysis, linear precoders, non-stationary channel.
\end{IEEEkeywords}
\vspace{-0.5em}\section{Introduction}
A massive MIMO system~\cite{Marzetta2010Noncooperative} is characterized by the use of many antennas and support for multiple users. At the extreme, the arrays may be physically very large~\cite{Martinez2014Towards,Hu2017Massive,Truong2013viability} and integrated into large structures like stadiums, or shopping malls. Unfortunately, when the dimension of the antenna array becomes large, different kinds of non-stationarities appear across the array. Further, different parts of the array may observe the same channel paths with different power, or even entirely different channel paths~\cite{Martinez2014Towards}. This effect may even be observed for compact arrays~\cite{Martinez2014Towards}. We show that non-stationarity in massive arrays has a significant impact on performance assessment and transceiver design. 

We propose a simple non-stationary channel model and analyze the performance of CB and ZF in the downlink of a multi-user massive MIMO system. The channel model is based on VRs that capture the received power variation across the array. We also propose a closed-form approximation of the~\SINR~of the ZF receiver. The expression shows the dependence of the SINR on channel parameters and allows a comparison between spatially stationary and non-stationary channels. The analysis and simulation results show that the impact of spatial non-stationarity on the performance of linear receivers is scenario dependent.

Few theoretical studies exist on spatially non-stationary channels in massive MIMO systems. In~\cite{Zhou2015Spherical} a spherical wave-front based LOS channel model was proposed and the channel capacity is studied with the proposed model. In~\cite{Li2015Capacity} an upper bound on the ergodic capacity of a non-stationary channel was provided. No prior work, however, has studied the performance of linear precoders with non-stationary channels.

\textbf{Notation:} $\bX$ is a matrix, $\bx$ is a vector, $\cX$ is a set, $x$ and $X$ are scalars. Superscript $\transp$, and $\ast$, represent transpose, and conjugate transpose, respectively. $\bbE[\cdot]$ is the expectation, and $\cC\cN(\bx,\bX)$ is a complex Normal with mean $\bx$ and covariance $\bX$. The identity matrix is $\bI$ and $\|\bx\|_p$ is the $p$-norm. The cardinality of a set $\cX$ is $|\cX|$. The matrix $\bX=\diag(\bx)$ is a diagonal matrix with the vector $\bx$ on its main diagonal, and $\tr(\bX)$ is the trace of matrix $\bX$. The operator $\overset{\rm{a.s.}}{\rightarrow}$ denotes almost sure convergence.
\vspace{-0.5em}\section{System and channel model}
We consider a narrowband broadcast system where the base-station (BS) equipped with $M$ antennas is serving $K$ single-antenna users ($M\geq K$). The BS serves all the users using the same time-frequency resource. The signal for user $k$, $s_k$ is precoded by $\bg_k\in\bbC^M$ and scaled by the signal power $p_k\geq 0$ before transmission. The transmit vector $\bx$ is the linear combination of the precoded and scaled signals of all the users, i.e.,
\begin{align}
\bx=\sum_{k=1}^K\sqrt{p_k}\bg_ks_k.
\end{align}
Let $\bG=[\bg_1,\bg_2,\cdots,\bg_K]\in\bbC^{M\times K}$ be the combined precoding matrix, $\bP=\diag([p_1,p_2,\cdots,p_K]^\transp)\in \bbR^{K\times K}$ be the diagonal matrix of signal powers, and $P\geq 0$ be the total power. The combined precoding matrix $\bG$ is normalized to satisfy the power constraint
\begin{align}
\bbE[ \| \bx \|^2 ]=\tr(\bP \bG^\ast \bG)= P.
\label{eq:powerconst}
\end{align}

Let $\bh_k\in\bbC^{M}$ denotes the random channel from the BS to user $k$. Then the received signal at the user $k$ is 
\begin{align}
\by_k=\bh_k^\ast\bx+n_k,~k=1,2,\cdots,K,
\label{eq:rxdsgnl}
\end{align}
where $n_k\sim\cC\cN(\bzero,\sigma^2)$ is the additive noise. Assuming independent Gaussian signaling, i.e., $s_k\sim\cC\cN(0,1)$ and $\bbE[s_i s_j^\ast]=0,~i\neq j$, the \SINR~$\gamma_k$ of user $k$ can be written as 
\begin{align}
\gamma_k=\frac{p_k |\bh_k^\ast \bg_k|^2 }{\sum\limits_{j=1,j\neq k}^K p_j |\bh_k^\ast \bg_j|^2 + \sigma^2}.
\label{eq:SINR}
\end{align}

Let $\bH=[\bh_1~\bh_2~\cdots~\bh_K]\in\bbC^{M\times K}$ denote the channel matrix between the BS and $K$ users. Then, the CB precoder is
\begin{align}
\bG_{\cobe}=\beta_{\rmC\rmB} \bH,
\label{eq:CBprecoder}
\end{align}
and the ZF precoder is
\begin{align}
\bG_{\zefo}=\beta_{\rmZ\rmF} \bH(\bH^\ast\bH)^{-1},
\label{eq:ZFprecoder}
\end{align}
where the scaling factors $\beta_{\rmC\rmB}=\sqrt{P/\tr(\bP\bH^\ast\bH)}$ and $\beta_{\rmZ\rmF}=\sqrt{P/\tr(\bP(\bH^\ast\bH)^{-1})}$ ensure that the power constraint \eqref{eq:powerconst} is met.

By defining $\rho= P/\sigma^2$ as the signal-to-noise ratio (\SNR) and using~\eqref{eq:CBprecoder} in~\eqref{eq:SINR}, the \SINR~of the $k$th user for CB is
\begin{align}
\gamma_k^{(\cobe)}=p_k\frac{\rho |\bh_k^\ast\bh_k|^2}{\rho \sum\limits_{j=1,j\neq k}^K p_j |\bh_k^\ast \bh_j|^2 + \tr(\bP\bH^\ast\bH)}.
\label{eq:SINRCB}
\end{align}
Similarly, using~\eqref{eq:ZFprecoder} in~\eqref{eq:SINR}, the \SINR~of the $k$th user for ZF is
\begin{align}
\gamma_k^{(\zefo)}=p_k\frac{\rho}{\tr(\bP(\bH^\ast\bH)^{-1})}.
\label{eq:SINRZF}
\end{align}

Let $\bR_k\in\bbC^{M\times M}$ be the spatial correlation matrix of user $k$ corresponding to the case of a stationary channel. Further, let $\bD_k$ be a diagonal matrix such that if the signal transmitted from only $D_k$ antennas is received by the user $k$, $\bD_k$ has $D_k$ non-zero diagonal entries. This diagonal matrix $\bD_k$ models the VR of user $k$. For the proposed channel model, we introduce a matrix $\bTheta_k$ of the form
\begin{align}
\bTheta_k=\bD_k^{\frac{1}{2}}\bR_k\bD_k^{\frac{1}{2}}.
\label{eq:gencov}
\end{align}

If $\bz_k\sim\cC\cN(0,\frac{1}{M}\bI)$, then by the proposed model, the channel of user $k$, $\bh_k$ is
\begin{align}
\bh_k=\sqrt{M}\bTheta_k^{\frac{1}{2}}\bz_k.
\label{eq:chmod}
\end{align}
For	 stationary channel $\bD_k=\bI$, and $\bTheta_k=\bR_k$. Therefore, the channel model~\eqref{eq:chmod} subsumes the well known correlated channel studied in~\cite{Hoydis2013Massive,Wagner2012Large}. 
\vspace{-0.5em}\section{Large system analysis in stationary channels}
Correlated stationary channels (i.e., $\bTheta_k=\bR_k$) were studied in~\cite{Hoydis2013Massive,Wagner2012Large}, under the following assumptions.
\begin{enumerate}
\item[A1] $M,K,\frac{M}{K}\rightarrow\infty$. (BS equipped with a large number of antennas serving a large number of users.)
\item[A2] The covariance matrices have a uniformly bounded spectral norm i.e., $\underset{M,K\rightarrow\infty}{\lim\sup} \underset{1\leq k \leq K}{\sup} \|\bR_k\| = \cO(1)$~\cite{Bjoernson2014Massive}. (User channels are not highly correlated.)
\item[A3] The power $p_{\rm{max}}=\max(p_1,p_2,\cdots,p_K)$ is of the order $O(1/K)$, i.e., $\|\bP\|=O(1/K)$. (Transmission power for all the users is on the same order.)
\end{enumerate}

With A1-A3, the deterministic equivalent of $\gamma_k^{(\cobe)}$ can be written as~\cite[eq. 24]{Hoydis2013Massive}
\begin{align}
\bar\gamma_k^{(\cobe)}=p_k \frac{\rho(\tr(\bR_k)^2)}{\rho \sum\limits_{j=1,j\neq k}^K p_j \tr(\bR_k\bR_j) + \sum\limits_{j=1}^{K}p_j\tr(\bR_j)}.
\label{eq:CBana}
\end{align}
where $\gamma_k^{(\cobe)}-\bar\gamma_k^{(\cobe)} \overset{\rm{a.s.}}{\underset{M\rightarrow \infty}{\longrightarrow}}0$. Using A1-A3 and some additional assumptions, the deterministic equivalent of $\gamma_k^{(\zefo)}$ was obtained in~\cite[eq. 34]{Wagner2012Large}. That expression, though, is not closed form and as such is not suitable for comparing stationary and non-stationary channels. With this motivation, we provide a closed form approximate expression in the following theorem. 
\vspace{-1.5em}
\begin{thm}
Under the assumptions A1-A3, an approximate deterministic equivalent of $\gamma_k^{(\zefo)}$ in~\eqref{eq:SINRZF} is
\begin{align}
{\bar\gamma}_k^{(\zefo)}=p_k\frac{\rho} {\sum\limits_{i=1}^{K}p_i\big(\tr(\bR_i)-\sum\limits_{j=1,j\neq i}^{K}\frac{\tr(\bR_i\bR_j)}{\tr(\bR_j)}\big)^{-1}},
\label{eq:ZFanaapp}
\end{align}
which is guaranteed to be non-negative as $\frac{M}{K}\rightarrow\infty$.
\end{thm}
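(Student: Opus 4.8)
\section*{Proof proposal}

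The plan is to reduce everything to the diagonal entries of $(\bH^\ast\bH)^{-1}$, since by~\eqref{eq:SINRZF} the only random object is $\tr(\bP(\bH^\ast\bH)^{-1})=\sum_{i=1}^K p_i\,[(\bH^\ast\bH)^{-1}]_{ii}$. First I would apply the block-matrix (Schur-complement) identity for a single diagonal entry of an inverse, which gives
\begin{align}
[(\bH^\ast\bH)^{-1}]_{kk}=\frac{1}{\bh_k^\ast\bh_k-\bh_k^\ast\bH_{[k]}(\bH_{[k]}^\ast\bH_{[k]})^{-1}\bH_{[k]}^\ast\bh_k},
\label{eq:schur}
\end{align}
where $\bH_{[k]}$ is $\bH$ with its $k$th column deleted. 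The denominator is the squared norm of the component of $\bh_k$ orthogonal to the span of the remaining channels, so the whole task is to produce a deterministic approximation of this scalar and then sum over $k$.

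For the two terms in the denominator of~\eqref{eq:schur} I would invoke the standard trace lemma for quadratic forms of the vectors $\bz_k\sim\cC\cN(\bzero,\frac{1}{M}\bI)$. Writing $\bh_k=\sqrt{M}\,\bTheta_k^{1/2}\bz_k$, the lemma gives $\bh_k^\ast\bh_k=M\,\bz_k^\ast\bTheta_k\bz_k\to\tr(\bTheta_k)=\tr(\bR_k)$ almost surely under A1--A2. For the cross terms I would condition on $\bh_j$, use $\bbE[\,|\bh_j^\ast\bh_k|^2\mid\bh_j]=\bh_j^\ast\bTheta_k\bh_j$, and then apply the lemma again to obtain $\bh_j^\ast\bTheta_k\bh_j\to\tr(\bTheta_j\bTheta_k)=\tr(\bR_j\bR_k)$, which is legitimate because every $\bTheta_j^{1/2}\bTheta_k\bTheta_j^{1/2}$ has bounded norm by A2. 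This fixes both the $\cO(\sqrt M)$ size of the off-diagonal Gram entries and the value $\tr(\bR_j\bR_k)$ assigned to the numerators in the sum below.

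The crux --- and the reason the statement is only an \emph{approximate} equivalent --- is the second term $\bh_k^\ast\bH_{[k]}(\bH_{[k]}^\ast\bH_{[k]})^{-1}\bH_{[k]}^\ast\bh_k=\bb^\ast\bC^{-1}\bb$, with $\bb=\bH_{[k]}^\ast\bh_k$ (entries $\bh_j^\ast\bh_k$) and $\bC=\bH_{[k]}^\ast\bH_{[k]}$. Its diagonal entries concentrate at $\tr(\bR_j)=\cO(M)$ while its off-diagonal entries are zero-mean of size $\cO(\sqrt M)$, so I would approximate $\bC$ by its diagonal and, summing over $j$ so that the individual chi-square fluctuations average out, write
\begin{align}
\bb^\ast\bC^{-1}\bb\approx\sum_{j\neq k}\frac{|\bh_j^\ast\bh_k|^2}{\bh_j^\ast\bh_j}\approx\sum_{j\neq k}\frac{\tr(\bR_k\bR_j)}{\tr(\bR_j)}.
\label{eq:diagapprox}
\end{align}
Making this rigorous is the main obstacle: $\bb$ and the off-diagonal part of $\bC$ are dependent, and the leading Neumann-series correction does not obviously vanish, so I would argue it is negligible in the regime $M/K\to\infty$ rather than claim exact convergence --- which is precisely the sense in which~\eqref{eq:ZFanaapp} is ``approximate.''

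Combining~\eqref{eq:schur}--\eqref{eq:diagapprox} yields $[(\bH^\ast\bH)^{-1}]_{kk}\approx(\tr(\bR_k)-\sum_{j\neq k}\tr(\bR_k\bR_j)/\tr(\bR_j))^{-1}$; summing against $\bP$ and substituting into~\eqref{eq:SINRZF} produces~\eqref{eq:ZFanaapp}. Finally, for non-negativity I would use $\tr(\bR_k\bR_j)\le\|\bR_k\|\,\tr(\bR_j)$, so that $\sum_{j\neq k}\tr(\bR_k\bR_j)/\tr(\bR_j)\le(K-1)\|\bR_k\|=\cO(K)$ by A2, whereas $\tr(\bR_k)$ grows linearly in $M$. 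Hence the bracket $\tr(\bR_k)-\sum_{j\neq k}\tr(\bR_k\bR_j)/\tr(\bR_j)$ is strictly positive once $M/K\to\infty$, giving the claimed guarantee.
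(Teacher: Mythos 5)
Your proposal is correct and follows essentially the same route as the paper's Appendix A: the Schur-complement identity for the diagonal entries of $(\bH^\ast\bH)^{-1}$, the trace lemma for the quadratic forms, the diagonal approximation of the reduced Gram matrix $\bar\bH_i^\ast\bar\bH_i$ (whose error the paper quantifies separately in Proposition 1), and the bound $\tr(\bR_i\bR_j)\leq\lambda_{\max}(\bR_i)\tr(\bR_j)$ together with A2 for non-negativity as $M/K\rightarrow\infty$. The only cosmetic difference is that you obtain the cross terms via conditional second moments of $|\bh_j^\ast\bh_k|^2$ rather than applying the trace lemma directly to $\bh_i^\ast\bar\bH_i\bV^{-1}\bar\bH_i^\ast\bh_i$, which amounts to the same computation.
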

\begin{proof}
See Appendix A. 
\end{proof}

\textbf{Remark:} The expression~\eqref{eq:ZFanaapp} is obtained with the help of a diagonal approximation (see Appendix A). In the following proposition, we provide the order of absolute error in this approximation.
\begin{prop}
The error in approximating the $O(K)$ term $\bh_i^\ast \bar\bH_i (\bar\bH_i^\ast \bar\bH_i)^{-1} \bar\bH_i^\ast \bh_i$ by $\bh_i^\ast\bar\bH_i\bV^{-1} \bar\bH_i^\ast\bh_i$ (where $\bV$ is defined in~\eqref{eq:diagapp}), is
\begin{align}
\epsilon_i=|\bh_i^\ast \bar\bH_i (\bar\bH_i^\ast \bar\bH_i)^{-1} \bar\bH_i^\ast \bh_i-\bh_i^\ast\bar\bH_i\bV^{-1} \bar\bH_i^\ast\bh_i|,
\label{eq:epsilon}
\end{align}
and is of $O\left(\frac{K}{\sqrt{M}}\right)$. 
\end{prop}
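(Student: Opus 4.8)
Write $\bA=\bar\bH_i^\ast\bar\bH_i$ for the $(K-1)\times(K-1)$ Gram matrix of the interfering channels and let $\bV$ be its diagonal approximation from~\eqref{eq:diagapp}, so that $\bE=\bA-\bV$ is a perturbation whose entries are the cross terms $\bh_j^\ast\bh_l$ ($j\neq l$) together with the diagonal fluctuations; in either reading every entry of $\bE$ is zero-mean and of order $\cO(\sqrt M)$. With $\bu=\bar\bH_i^\ast\bh_i$ the quantity to be controlled is $\epsilon_i=|\bu^\ast(\bA^{-1}-\bV^{-1})\bu|$. The plan is to expand $\bA^{-1}$ about $\bV^{-1}$ through the resolvent identity $\bA^{-1}-\bV^{-1}=-\bV^{-1}\bE\bA^{-1}$, isolate the leading correction $-\bV^{-1}\bE\bV^{-1}$, and show that the induced quadratic form is $\cO(K/\sqrt M)$ by a second-moment estimate rather than a crude operator-norm bound.

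First I would record the orders of the constituent quantities using the trace and quadratic-form concentration lemmas invoked in~\cite{Hoydis2013Massive,Wagner2012Large} under A1--A3. The diagonal entries obey $[\bA]_{jj}=\|\bh_j\|^2=\tr(\bR_j)(1+o(1))=\Theta(M)$, so $[\bV]_{jj}=\Theta(M)$ and $\|\bV^{-1}\|=\Theta(1/M)$; the cross terms satisfy $\bbE|\bh_j^\ast\bh_l|^2=\tr(\bR_j\bR_l)=\Theta(M)$, i.e.\ each is $\cO(\sqrt M)$; and $\|\bu\|^2=\sum_{j\neq i}|\bh_i^\ast\bh_j|^2=\Theta(KM)$. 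Because $M/K\to\infty$, the normalized Gram $\tfrac1M\bA$ is well conditioned, the least singular value of $\bar\bH_i$ is $\Theta(\sqrt M)$, and $\|\bA^{-1}\|=\Theta(1/M)$; equivalently $\|\bV^{-1/2}\bE\bV^{-1/2}\|=\cO(\sqrt{K/M})\to0$, which makes the Neumann series converge and is what ultimately underlies the invertibility/non-negativity asserted in the theorem.

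The delicate point is the exact power of $K$. Bounding $\epsilon_i\le\|\bV^{-1}\bu\|\,\|\bE\|\,\|\bA^{-1}\bu\|$ is too lossy: $\|\bE\|=\Theta(\sqrt{MK})$ for a matrix with $\cO(\sqrt M)$ entries, while $\|\bV^{-1}\bu\|=\|\bA^{-1}\bu\|=\cO(\sqrt{K/M})$, so this route only yields $\cO(K^{3/2}/\sqrt M)$. To recover the sharp rate I would instead study the second moment of the dominant, off-diagonal part of the leading correction, namely
\begin{align}
S_i=-\sum_{j\neq l}\frac{(\bh_i^\ast\bh_j)(\bh_j^\ast\bh_l)(\bh_l^\ast\bh_i)}{[\bV]_{jj}[\bV]_{ll}},
\end{align}
the neglected diagonal fluctuations being of strictly lower order $\cO(\sqrt{K/M})$. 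Each summand has numerator of magnitude $\cO(M^{3/2})$ and denominator $\Theta(M^2)$, hence is $\cO(1/\sqrt M)$; in computing $\bbE[S_i^2]$ only the $\Theta(K^2)$ index-matched pairs survive, each contributing $\Theta(1/M)$, so $\bbE[S_i^2]=\cO(K^2/M)$ and therefore $S_i=\cO(K/\sqrt M)$ in the mean square (and, with the concentration lemmas, almost surely). Since the term being approximated is itself $\Theta(K)$, this is a relative error of $\cO(1/\sqrt M)$.

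The main obstacle is twofold. First, $\bu=\bar\bH_i^\ast\bh_i$ and $\bE$ are built from the same interfering channels and are therefore dependent; I would break this dependence by conditioning on $\bar\bH_i$ and writing $\epsilon_i=|\bh_i^\ast\bC\bh_i|$ with $\bC=\bar\bH_i(\bA^{-1}-\bV^{-1})\bar\bH_i^\ast$ deterministic given $\bar\bH_i$, then applying the trace lemma $\bh_i^\ast\bC\bh_i-\tr(\bC\bR_i)\overset{\rm a.s.}{\to}0$ and bounding $\tr(\bC\bR_i)$ together with its fluctuation. Second, one must verify that the higher-order Neumann terms $\bV^{-1}(\bE\bV^{-1})^m$, $m\ge2$, are negligible; this holds because each extra factor $\bV^{-1/2}\bE\bV^{-1/2}$ carries operator norm $\cO(\sqrt{K/M})=o(1)$, so the tail is dominated by the $m=1$ term. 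The crux of the whole argument is the cancellation in the second-moment count: showing explicitly that the $\Theta(K^2)$ cross terms do not add coherently, which is precisely what separates the correct $\cO(K/\sqrt M)$ from the naive $\cO(K^{3/2}/\sqrt M)$.
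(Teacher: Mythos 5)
Your overall route---write $\bA=\bar\bH_i^\ast\bar\bH_i=\bV+\bE$, expand the inverse about $\bV^{-1}$, and argue that the leading correction is $O(K/\sqrt M)$ through cancellation among its $\Theta(K^2)$ summands---is the same as the paper's, which does this expansion and then an informal entry-order count. The gap is in the step you use to make the cancellation rigorous: it is \emph{not} true that only index-matched pairs survive in $\bbE[S_i^2]$, because the summands
\begin{align}
t_{jl}=\frac{(\bh_i^\ast\bh_j)(\bh_j^\ast\bh_l)(\bh_l^\ast\bh_i)}{[\bV]_{jj}[\bV]_{ll}}
\end{align}
are not zero-mean. Each interfering channel enters the trilinear numerator quadratically, so averaging over $\bh_j$ leaves approximately $\bh_i^\ast\bR_j\bh_l\,(\bh_l^\ast\bh_i)/(\tr(\bR_j)[\bV]_{ll})$, and averaging then over $\bh_l$ gives $\bbE[t_{jl}]\approx\tr(\bR_i\bR_j\bR_l)/(\tr(\bR_j)\tr(\bR_l))=\Theta(1/M)$; in the paper's own benchmark $\bR_k=\bI$ this is exact, $\bbE[t_{jl}\mid\bh_i]=\|\bh_i\|^2/M^2\approx 1/M>0$. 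These $\Theta(K^2)$ same-sign means add coherently, so $S_i$ concentrates around $\Theta(K^2/M)$, and the disjoint-index pairs contribute $\Theta(K^4/M^2)$ to $\bbE[S_i^2]$ rather than vanishing. Your estimate therefore yields only $S_i=O(K/\sqrt M+K^2/M)$, which matches the claimed rate precisely when $K=O(\sqrt M)$---but A1 allows $\sqrt M\ll K\ll M$, and the paper's own simulations use $K=M/2$. The same systematic term breaks your truncation step: the $m\ge 2$ Neumann tail is bounded by $\|\bV^{-1/2}\bu\|^2\,\|\bV^{-1/2}\bE\bV^{-1/2}\|^2=O(K)\cdot O(K/M)=O(K^2/M)$, again not $o(K/\sqrt M)$.

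The proposition nevertheless holds, because the $\Theta(K^2/M)$ systematic part of the first-order term is cancelled by an equal systematic part of the remainder---a cancellation that no bound on the truncated term alone can detect (the paper's heuristic counting silently skates over the same point). The workable argument is the one you relegate to a side remark about breaking dependence: set $\bC=\bar\bH_i(\bA^{-1}-\bV^{-1})\bar\bH_i^\ast$, condition on $\bar\bH_i$, and apply the trace lemma to $\bh_i^\ast\bC\bh_i$. Because $\bV$ retains the \emph{exact} diagonal of $\bA$, one has the deterministic identities $\tr(\bar\bH_i\bV^{-1}\bar\bH_i^\ast)=\sum_{j\neq i}[\bA]_{jj}/[\bV]_{jj}=K-1=\tr(\bar\bH_i\bA^{-1}\bar\bH_i^\ast)$, so $\tr(\bC)=0$ identically: the systematic parts cancel inside $\bC$, and for $\bR_k=\bI$ the conditional mean of $\bh_i^\ast\bC\bh_i$ vanishes (for general covariances one must additionally check that $\tr(\bC\bR_i)$ is negligible). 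The error is then the fluctuation of a zero-trace quadratic form, of order $\sqrt{\tr(\bC\bR_i\bC\bR_i)}\le\|\bR_i\|\,\|\bC\|_F$, and a short computation using $P\bar\bH_i=\bar\bH_i$ for the projection $P=\bar\bH_i\bA^{-1}\bar\bH_i^\ast$ gives the exact identity
\begin{align}
\|\bC\|_F^2=\sum_{j\neq l}\frac{|\bh_j^\ast\bh_l|^2}{\|\bh_j\|^2\,\|\bh_l\|^2}=\Theta\!\left(\frac{K^2}{M}\right),
\end{align}
hence $\epsilon_i=O(K/\sqrt M)$ uniformly in the A1 regime. You should promote that conditioning argument to the core of the proof and drop the second-moment count of $S_i$.
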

\begin{proof}
See Appendix B.
\end{proof}
\vspace{-0.5em}
\section{Stationary versus non-stationarity channels}\label{sec:cost}
For non-stationary channels, we introduce two additional assumptions.
\begin{enumerate}
\item[A4] $D_k\rightarrow\infty,~\forall k.$ (Large VR for each user).
\item[A5] $\underset{M,K\rightarrow\infty}{\lim\sup} \underset{1\leq k \leq K}{\sup} \|\bTheta_k\| < \infty$. (Non-stationary user channels are not highly correlated.)
\end{enumerate}
With this, the same analysis that led to~\eqref{eq:CBana} and~\eqref{eq:ZFanaapp} can be used for non-stationary channels. In fact, only $\bR$ needs to be replaced with $\bTheta$ in theoretical expressions~\eqref{eq:CBana}, and~\eqref{eq:ZFanaapp} to get the results for the non-stationary case. 

For comparison between stationary and non-stationary channels, we make a few simplistic choices of system and channel parameters. Specifically, we start by considering $p_k=\frac{P}{K}~\forall k$~and $\bR_k=\bI~\forall k$. For non-stationary channels, we consider two types of channel normalization.

\textbf{Normalization 1, $\tr(\bTheta_k)=\tr(\bR_k)=M,~\forall k$:} This ensures that the stationary and non-stationary channels have the same norm. The physical implication of normalization 1 is shown in Fig.~\ref{fig:chstnonst}~(a). User terminal 1 (i.e., farther) receives signal from all antennas but with lower power, whereas user terminal 2 (i.e., closer) receives signal from fewer antennas but with higher power. This is achieved by choosing $\bD_k=\diag([\bzero, \sqrt{\frac{M}{D_k}}\bone_{D_k}, \bzero]^\transp)$.
 
\textbf{Normalization 2, $\tr(\bTheta_k)=D_k~\forall k$:} The physical implication of normalization 2 is shown in Fig.~\ref{fig:chstnonst}~(b). User terminal $1$ and $2$ are equidistant from the BS, however, user terminal $2$ receives signal from only a few antennas. This is achieved by choosing $\bD_k=\diag([\bzero, \bone_{D_k}, \bzero]^\transp)$.

To further simplify the comparison, we assume that $D_k=D~\forall k$.
\begin{figure}[h!]
\centering
\begin{subfigure}{0.38\textwidth}
        \centering
        \includegraphics[width=1\textwidth]{./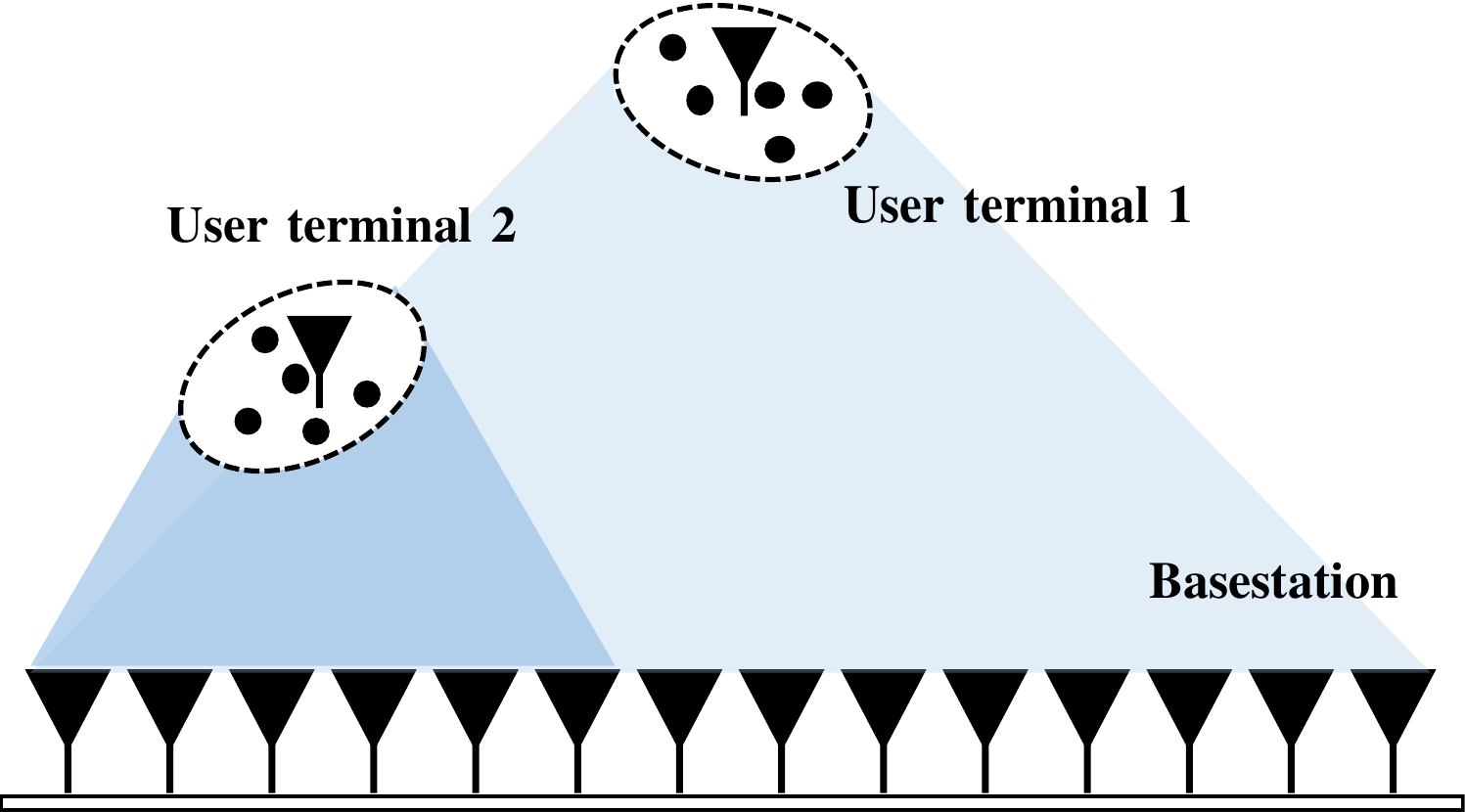}
        \caption{Physical implication of normalization 1 i.e., $\tr(\bTheta)=M$.}
        \label{fig:st_non_st1}
        \end{subfigure}
            \begin{subfigure}{0.38\textwidth}
        \centering
        \includegraphics[width=1\textwidth]{./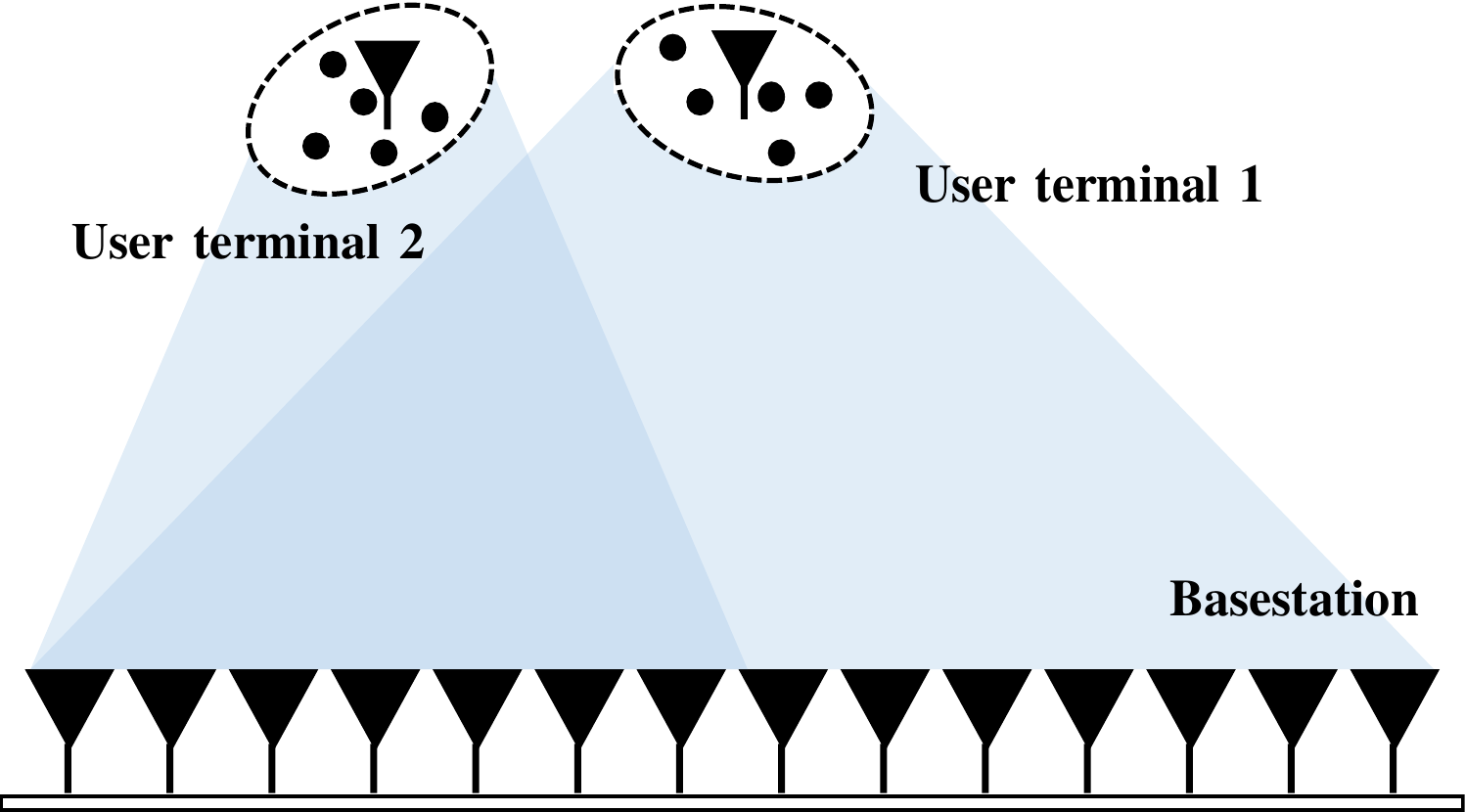}
        \caption{Physical implication of normalization 2 i.e., $\tr(\bTheta)=D$.}
        \label{fig:st_non_st2}
    \end{subfigure}
  \hfill
        \caption{Physical implication of normalization 1 (i.e., $\tr(\bTheta)=M$) and normalization 2 (i.e., $\tr(\bTheta)=D$).}
        \label{fig:chstnonst}
\end{figure}

We outline the \SINR~for CB in stationary and non-stationary channels with normalization 1 in detail below. The derivations for normalization 2 and ZF precoding are similar and the results are summarized in Table~~\ref{tab:nonstationaritycost}. 
  
The \SINR~of CB~\eqref{eq:CBana} for stationary channels simplifies to
\begin{align}
\gamma_k^{(\cobe)}-{\rm{st.}}=\frac{\rho M}{\rho(K-1)+K}.
\label{eq:CBstsim}
\end{align}

For the non-stationary channel under consideration, a user receives the signal transmitted from $D$ antennas. The indices of these antennas for user $k$ are collected in a set $\cD_k$. The \SINR~of user $k$ depends on $|\cD_k\cap\cD_j|~\forall j\neq k$ (i.e., inter-user interference). For example, if $K=2$, $D=M/2$ and $\cD_1=\cD_2=\{1,\cdots,M/2\}$, then $|\cD_1\cap\cD_2|=D$ and there is high inter-user interference. If, however, $\cD_1=\{1,\cdots,M/2\}$ and $\cD_2=\{M/2+1,\cdots,M\}$, then $|\cD_1\cap\cD_2|=0$ and there is no inter-user interference. Thus for non-stationary channels, we consider the best-case (and worst-case) index sets $\cD_k$ that result in maximum (and minimum) possible~\SINR~for the considered setup. 

In the worst-case, there is high inter-user interference. This happens, when all the $K$ users receive the signal from the same $D$ antenna elements. In this case, assuming $\tr(\bTheta)=M$, it can be shown that
\begin{align}
\gamma_k^{(\cobe)}-{\rm{non~st.~(worst)}}=\frac{\rho M}{\rho\frac{M}{D}(K-1)+K}.
\label{eq:CBNstworst}
\end{align}  
There is an additional factor $M/D$ in the first term in denominator of~\eqref{eq:CBNstworst} compared to~\eqref{eq:CBstsim}.  Therefore, for worst-case, the smaller the VR of the user (i.e., in this case, the number of active antennas $D$), the more \SINR~loss for non-stationary channels. 

In the best-case, there is low inter-user interference for all the $K$ users. The best-case antenna indices can be found using counting arguments. Asymptotically, with a user $k$ receiving signal from $D$ antennas, and a total of $M$ antennas, we can arrange only $M/D$ users without any inter-user interference. If we continue this arrangement for all users, there will be $\frac{KD}{M}-1$ interfering users for any user $k$. With this observation, the best-case \SINR~can be written as
\begin{align}
\gamma_k^{(\cobe)}-{\rm{non~st.~(best)}}=\frac{\rho M}{\rho(K-\frac{M}{D})+K}.
\label{eq:CBNstbest}
\end{align}

If $KD\leq M$, there will be no inter-user interference with the arrangement described above and~\eqref{eq:CBNstbest} can be further simplified. Note that the first term in the denominator of~\eqref{eq:CBstsim} and~\eqref{eq:CBNstbest} differs. Specifically, for best-case, if $\frac{M}{D}$ is large (i.e., smaller VR), then the \SINR~of CB precoders for non-stationary channels can be better than the stationary channels.

\begin{table*}[h!]
\centering
\caption{\SINR~expressions for CB and ZF precoders for stationary and non-stationary channels.}
\label{tab:nonstationaritycost}
\setlength\extrarowheight{5pt}
\begin{tabular}{c|c|ccc|}
\cline{2-5}
& \multirow{2}{*}{Stationary} &  \multicolumn{3}{c|}{Non-stationary}\\
\cline{3-5}
&&&Worst&Best\\
\cline{2-5}
\multirow{2}{*}{CB}&\multirow{2}{*}{$\frac{\rho M}{\rho(K-1)+K}$}&Normalization 1: $\tr(\bTheta_k)=M$&$\frac{\rho M}{\rho\frac{M}{D}(K-1)+K}$
&$\frac{\rho M}{\rho(K-\frac{M}{D})+K}$\\
\cline{3-5}
&&Normalization 2: $\tr(\bTheta_k)=D$&$\frac{\rho D}{\rho(K-1)+K}$&$\frac{\rho D}{\rho(\frac{KD}{M}-1)+K}$
\\
\cline{2-5}
\multirow{2}{*}{ZF}&\multirow{2}{*}{$\frac{\rho(M-K+1)}{K}$}&Normalization 1: $\tr(\bTheta_k)=M$&$\frac{\rho(M-\frac{M}{D}(K-1))}{K}$&$\frac{\rho(M-K+\frac{M}{D})}{K}$\\
\cline{3-5}
&&Normalization 2: $\tr(\bTheta_k)=D$&$\frac{\rho(D-K+1)}{K}$&$\frac{\rho(D-\frac{KD}{M}+1)}{K}$\\
\cline{2-5}
\end{tabular}
\end{table*}
\vspace{-0.5em}\section{Numerical results}
We verify the analysis of the non-stationary channels. We consider $M=60$, $K=M/2$, and $\rho=\SI{10}{\decibel}$. We plot the \SINR~results against the active number of antennas per user $D$. We consider both the~best-case and worst-case antenna configurations discussed in Section~\ref{sec:cost}. We plot the results for normalization 1 i.e., $\tr(\bTheta)=M$ in Fig.~\ref{fig:Res_IID}, and for normalization 2 i.e., $\tr(\bTheta)=D$ in Fig.~\ref{fig:Res_IID_NN}. From Fig.~\ref{fig:Res_IID}, notice that when $D$ is small, the \SINR~of the non-stationary channels in the best-case (worst-case) is higher (lower) than the stationary channels. The worst-case performance loss for CB (ZF) is as high as $\SI{15}{\decibel}$ ($\SI{12.5}{\decibel}$). As $D$ increases, however, the \SINR~in the non-stationary channels converge to the \SINR~of the stationary channels. This observation holds for both CB and ZF precoding. The ZF curve in the worst-case starts from $D=30$. For smaller values of $D$, the channel matrix $\bH$ is rank deficient and ZF precoding fails. From Fig.~\ref{fig:Res_IID_NN}, we notice that with normalization $2$, the \SINR~of the non-stationary channels is lower than the~\SINR~of the stationary channels (for $\rho=\SI{10}{\decibel}$). With normalization 2, the performance loss for both CB and ZF can be as high as $\SI{15}{\decibel}$.

\begin{figure}[h!]
\centering
\begin{subfigure}{0.38\textwidth}
        \centering
        \includegraphics[width=\textwidth]{./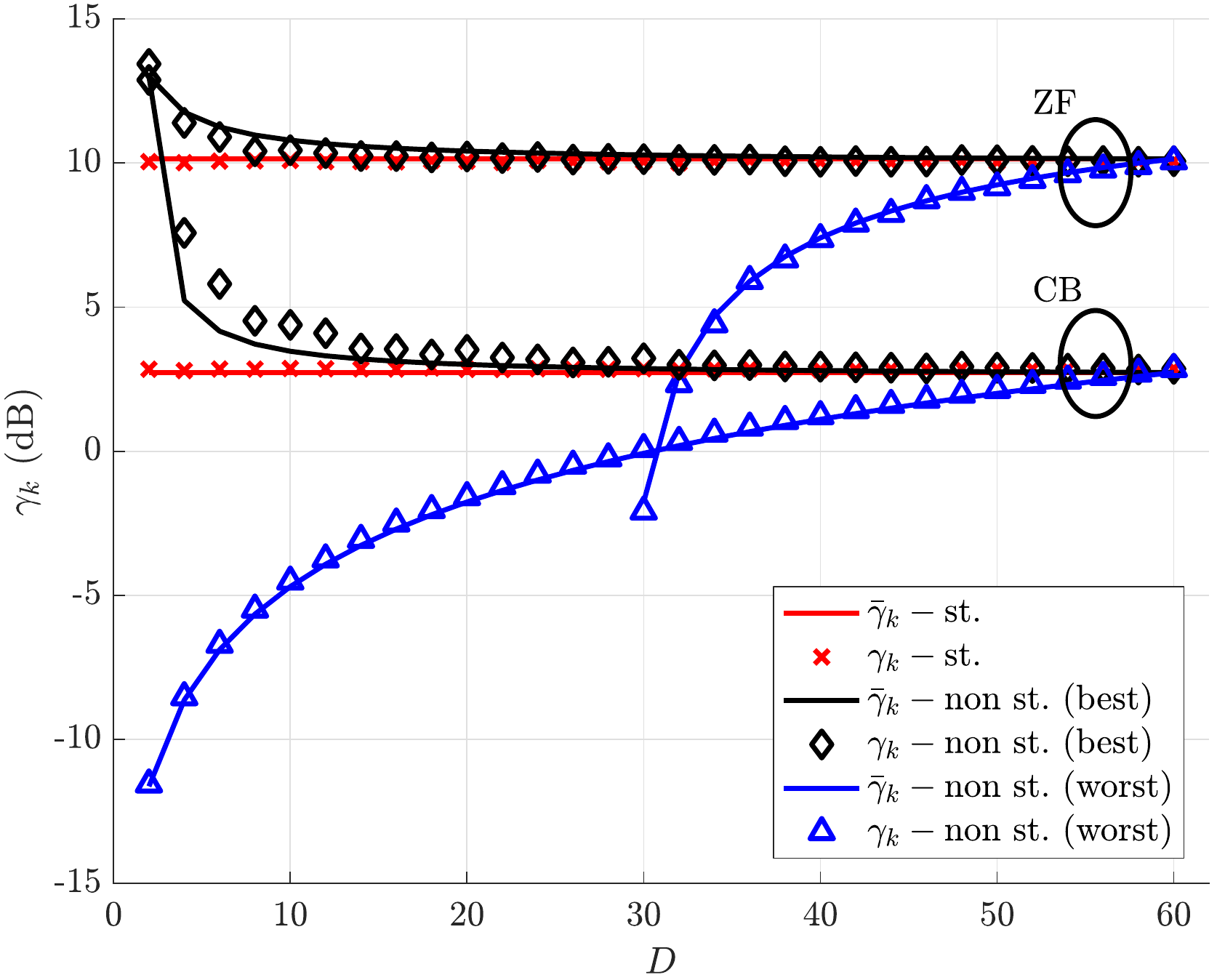}
        \caption{Normalization 1: $\tr(\bTheta)=M$.}
        \label{fig:Res_IID}
        \end{subfigure}
            \begin{subfigure}{0.38\textwidth}
        \centering
        \includegraphics[width=\textwidth]{./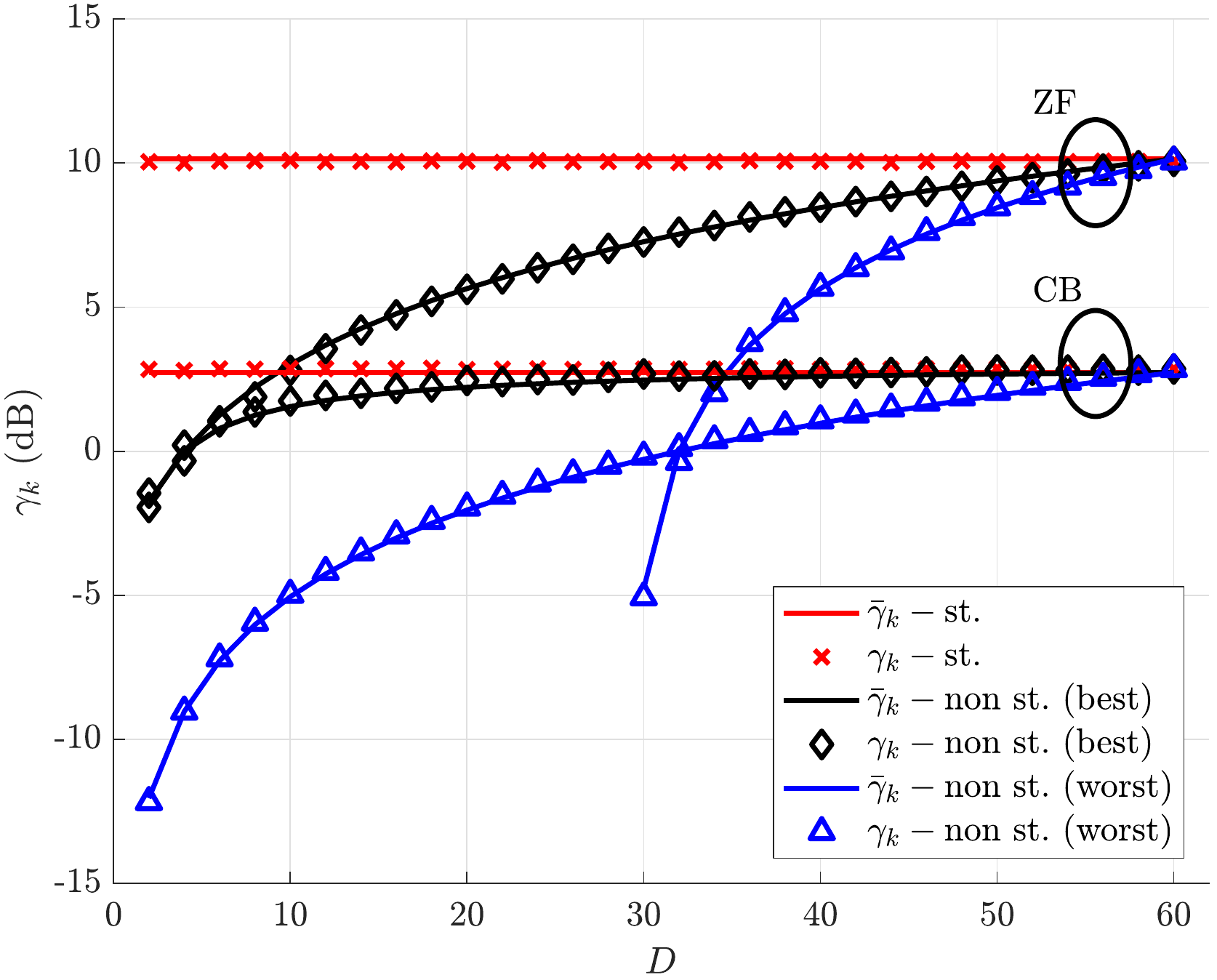}
        \caption{Normalization 2: $\tr(\bTheta)=D$.}
        \label{fig:Res_IID_NN}
    \end{subfigure}
  \hfill
        \caption{The \SINR~vs the active number of antennas $D$ ($M=60$, $K=30$, and $\rho=\SI{10}{\decibel}$).}
        \label{fig:Res}
\end{figure}

\vspace{-0.5em}\section{Conclusion}
The VR of the channel impacts the performance of CB and ZF precoders significantly. For small VRs, the post processing~\SINR~loss compared with the stationary channels can be as high as \SI{15}{\decibel} for both CB and ZF. In the best-case, i.e., when the VRs of the users reduce inter-user interference, the post-processing~\SINR~for both CB and ZF can be higher compared to a stationary channel. Finally, small VRs can make the channel rank deficient and render the ZF precoding infeasible.
\vspace{-0.5em}\section*{Appendix A: Proof of Theorem 1}
The \SINR~for ZF~\eqref{eq:SINRZF} can be re-written as
\begin{align}
\gamma_k^{\zefo}=p_k\frac{\rho}{\sum\limits_{i=1}^{K}p_i(\bH^\ast\bH)^{-1}_{i,i}},
\label{eq:SINRZFsumform}
\end{align}
where $(\bH^\ast\bH)^{-1}_{i,i}$ is the $i$th diagonal entry of the inverse matrix $(\bH^\ast\bH)^{-1}$. This entry can be re-written as
\begin{align}
(\bH^\ast\bH)^{-1}_{i,i}=(\bh_i^\ast \bh_i - \bh_i^\ast \bar\bH_i (\bar\bH_i^\ast \bar\bH_i)^{-1} \bar\bH_i^\ast \bh_i)^{-1},
\label{eq:invivalue}
\end{align}
where $\bar\bH_i=[\bh_1,~\cdots,~\bh_{i-1},~\bh_{i+1},~\cdots,~\bh_K]$. The first term on the RHS of~\eqref{eq:invivalue} can be evaluated using~\eqref{eq:chmod} and~\cite[Lemma 4]{Wagner2012Large}, i.e.,
\begin{align}
\bz_i^\ast\bR_i\bz_i-\frac{1}{M}\tr(\bR_i)\overset{\rm{a.s.}}{\underset{M\rightarrow \infty}{\longrightarrow}}0.
\label{eq:oneRHS}
\end{align}
For the second term, we approximate $(\bar\bH_i^\ast\bar\bH_i)^{-1}$ by retaining only its diagonal entries. For large $M$, approximating the off-diagonal terms to $0$ is reasonable as due to~\cite[Lemma 5]{Wagner2012Large}
\begin{align}
\bz_i^\ast\bR_i^{\frac{1}{2}}\bR_j^{\frac{1}{2}}\bz_j\overset{\rm{a.s.}}{\underset{M\rightarrow \infty}{\longrightarrow}}0.
\label{eq:crossterms}
\end{align}
We retain the diagonal entries in a matrix $\bV_i$ defined as
\begin{align}
\bV_{ij}=\begin{cases}
(\bar\bH_i^\ast\bar\bH_i)_{i,i}&\text{when }i=j,\\
0&\text{otherwise}.
\end{cases}
\label{eq:diagapp}
\end{align}

With the diagonal approximation, the second term on the RHS can be evaluated as
\begin{align}
\bh_i^\ast\bar\bH_i\bV^{-1} \bar\bH_i^\ast\bh_i \overset{(a)}{=}\tr\big(\bar\bH_i\bV^{-1} \bar\bH_i^\ast\bR_i\big),\overset{(b)}{=}\sum\limits_{j=1,j\neq i}^K 
\frac{\bh_j^\ast \bR_i \bh_j}{\bh_j^\ast\bh_j}\overset{(c)}{=}\sum\limits_{j=1,j\neq i}^K 
\frac{\tr(\bR_i\bR_j)}{\tr(\bR_j)},
\label{eq:twoRHS}
\end{align}
where $(a)$ is due to~\cite[Lemma 4]{Wagner2012Large}, $(b)$ is by simple algebraic manipulation, and $(c)$ is due to the use of~\cite[Lemma 4]{Wagner2012Large} in both the numerator and denominator. We obtain~\eqref{eq:ZFanaapp} by using~\eqref{eq:oneRHS} and~\eqref{eq:twoRHS} in~\eqref{eq:SINRZFsumform}.

To guarantee the non-negativity of~\eqref{eq:ZFanaapp}, it is sufficient to show that
\begin{align}
\tr(\bR_i)-\sum\limits_{j=1,j\neq i}^{K}\frac{\tr(\bR_i\bR_j)}{\tr(\bR_j)}\geq 0,\forall i.
\label{eq:ZFnonneg1}
\end{align}
If $\lambda_{\max}(\bR_i)$ represents the largest eigenvalue of $\bR_i$, then by using $\tr(\bR_i)=M,\forall i$ and re-arranging terms in \eqref{eq:ZFnonneg1}, we get
\begin{align}
M^2 \geq \sum\limits_{\substack{j=1 \\ j\neq i}}^{K}\tr(\bR_i\bR_j)\overset{(a)}{\geq}\sum\limits_{\substack{j=1 \\ j\neq i}}^{K}\lambda_{\max}(\bR_i)\tr(\bR_j)= M (K-1) \lambda_{\max}(\bR_i),
\end{align}
where $(a)$ is from the property $\tr(\bA\bB)\leq\lambda_{\max}(\bA)\tr(\bB)$~\cite{Fang1994Inequalities}. As $\lambda_{\max}(\bR_i)$ is $\cO(1)$ by A2, the above inequality is guaranteed to hold as $M/K\rightarrow\infty$. 
\vspace{-0.5em}\section*{Appendix B: Proof of Proposition 1}
We can write $\bar\bH_i^\ast \bar\bH=\bV+\bE$, where $\bV$ is a diagonal matrix with terms of $O(M)$, and the matrix $\bE$ has terms of $O(\sqrt{M})$. The first order Taylor series expansion of $(\bar\bH_i^\ast \bar\bH)^{-1}$ gives 
\begin{align}
(\bar\bH_i^\ast \bar\bH)^{-1}-\bV^{-1}\approx-\bV^{-1}\bE\bV^{-1}.
\end{align}
This result can be used in~\eqref{eq:epsilon} to write the error as 
\begin{align}
\epsilon&=| \bh_i^\ast \bar\bH_i (\bV^{-1}\bE\bV^{-1})\bar\bH_i^\ast \bh_i|.
\end{align}
The terms in the vector $\bar\bH_i^\ast \bh_i$ are $O(\sqrt{M})$. The terms in the matrix $\bV^{-1}\bE\bV^{-1}$ are order $O(\frac{1}{\sqrt{M}^3})$. Thus the terms in the product vector $\bV^{-1}\bE\bV^{-1} \bar\bH_i^\ast \bh_i$ are $O(\frac{\sqrt{K}}{M})$. Extending the same argument, the term $\bh_i^\ast \bar\bH_i (\bV^{-1}\bE\bV^{-1})\bar\bH_i^\ast \bh_i$ is $O(\frac{K}{\sqrt{M}})$. By using similar arguments, we can show that $\bh_i^\ast \bar\bH_i (\bar\bH_i^\ast \bar\bH_i)^{-1} \bar\bH_i^\ast \bh_i$ is $O(K)$.
\bibliographystyle{IEEEtran}
\bibliography{\centrallocation/Abbr,\centrallocation/Master_Bibliography}{}
\end{document}